\DeclareSymbolFont{matha}{OML}{txmi}{m}{it}
\DeclareMathSymbol{\varv}{\mathord}{matha}{118}
\begin{document}

\title{Impact of Transceiver Impairments on the Capacity of Dual-Hop Relay
Massive MIMO  Systems}
\author{Anastasios K. Papazafeiropoulos$^{*}$, Shree Krishna Sharma$^{\dag}$, and Symeon Chatzinotas$^{\dag}$\vspace{2mm} \\
$^{*}$Communications and Signal Processing Group, Imperial College London, London, U.K.\\
$^{\dag}$SnT - securityandtrust.lu, University of Luxembourg, Luxembourg\\
Email: a.papazafeiropoulos@imperial.ac.uk, \{shree.sharma, symeon.chatzinotas\}@uni.lu
 }

\maketitle

\vspace{-2cm}

\begin{abstract}
Despite the deleterious effect of hardware impairments on communication systems, most prior works have not investigated their impact on widely used relay systems. Most importantly, the application of inexpensive transceivers, being prone to hardware impairments, is the most cost-efficient way for the implementation of massive multiple-input multiple-output (MIMO) systems. Consequently, the direction of this paper is towards the investigation of the impact of hardware impairments on MIMO relay networks with large number of antennas.  Specifically, we obtain the general expression for the ergodic capacity of dual-hop (DH) amplify-and-forward (AF) relay systems. Next, given the advantages of the free probability (FP) theory with comparison to other known techniques in the area of large random matrix theory, we pursue a large limit analysis in terms of number of antennas and users by shedding light to the behavior of relay systems inflicted by hardware impairments.
\end{abstract}

\section{Introduction}
\label{sec:1}
Since the publication of the seminal papers of Telatar and Foschini~\cite{Telatar1999,Foschini1998}, showing the linear growth of the channel capacity by increasing the number of transmit and receive antennas,  multiple-input multiple-output (MIMO) systems have attracted a tremendous interest. Especially, the demand for a thousand-fold higher capacity in 5G systems has brought to the forefront a promising technique with numerous advantages, known as massive MIMO, where the base station (BS) includes a very large number of antennas~\cite{Marzetta2010}. The research  on massive MIMO has been approached mostly by applying tools from large random matrix theory such as the Silverstein's fixed-point equation and the technique of deterministic equivalents~\cite{Couillet2011,Papazafeiropoulos2014,Papazafeiropoulos2015a,Papazafeiropoulos2015}. However, most works in conventional and massive MIMO  have been based on the strong assumption of using perfect hardware in the radio-frequency (RF) chains.

Indisputably, in practical systems, numerous detrimental effects such as I/Q imbalance~\cite{Qi2010} and high-power amplifier nonlinearities~\cite{Qi2012}, appear and result to the degradation of the performance of MIMO systems. Despite the effort for mitigation of the arising impairments by application of calibration schemes at the transmitter and/or compensation algorithms at the receiver~\cite{Schenk2008}, residual distortions remain because of several reasons. For example, imperfect parameters estimation and inaccurate models  prove to be incapable to hinder the total infliction  of the system's performance. Not to mention that the cost-efficiency of the suggested for 5G massive MIMO technology rests on the application of inexpensive hardware, which will make the deleterious effect of the residual impairments more pronounced.

Disregarding the importance for study of the effects of the residual transceiver impairments, the number of relevant works is limited. For instance, experimental results modeling the residual hardware impairments at the transmitter and the study of their impact on certain MIMO detectors such as zero-forcing took place in~\cite{Studer2010}. Regarding the channel capacity,~\cite{Bjoernson2013} elaborated on the derivation of high signal-to-noise ratio (SNR) ceilings by considering only transmitter impairments, while in~\cite{Zhang2014} the authors extended the analysis to arbitrary SNR values, but most importantly, by including receiver impairments as well. Moreover, the authors in~\cite{Bjornson2013} considered this kind of impairments in dual-hop (DH) amplify-and-forward (AF) relay systems, which have attracted a lot of attention recently due to their performance benefits in terms of coverage extension, spatial diversity gains, etc.~\cite{Yang2009}. Unlikely,  they considered only the outage probability and  some simple capacity upper bounds in the simplistic case of single antenna systems. However, a thorough analysis of the capacity of relay systems in the case of multiple antennas lacks from the literature.  

This work covers the arising need for the assessment of the impact of the residual hardware impairments on the capacity of DH AF systems,  when their size becomes large in terms of number of antennas and users. In fact, to the best of our knowledge, this is the first paper which studies the effect of residual impairments in relay systems with multiple antennas. Specifically, we present a new insightful expression for the ergodic capacity of DH AF systems under the presence of residual hardware impairments for arbitrary SNR values. Compared to the existing literature, we pursue a free probability (FP) analysis~\cite{Couillet2011,DBLP:journals/twc/ChatzinotasIH09}, which requires just a polynomial solution instead of fixed-point equations, and provide a thorough characterization of the impact of the residual transceiver impairments on the capacity of DH AF systems in the large system limit.

The remainder of this paper is organized as follows: Section \ref{sec:2} presents the system model of a DH AF MIMO system with residual hardware impairments. Section \ref{sec:3} provides the theoretical analysis for the ergodic capacity of the considered system, while Section \ref{sec:4} presents asymptotic capacity expressions. Subsequently, Section \ref{sec:5} evaluates the performance of the considered system with the help of numerical results. Finally, Section \ref{sec:6} concludes the paper. Appendices include some preliminaries on random matrix theory and certain proofs. 

\section{System Model}
\label{sec:2}
Suppose an ideal DH AF relay channel with $K$ single antenna non-cooperative users, desiring to  communicate with a distant $N$-antennas BS by first contacting an intermediate relay including an array of $M$ antennas (first hop). In other words, a single-input multiple-output multiple access channel (SIMO MAC), i.e., users-relay, is followed by a point to point MIMO channel (relay-BS). The BS is assumed to be aware of the total system channel state information (CSI) and the statistics of the distortion noises, while both the users and the relay have no CSI knowledge during their transmission. The received signals by the relay and the BS are expressed as 
\begin{align}
\by_{1}&=\bH_{1}\bx_{1}+\bz_{1},\label{BasicSystemModell}\\
\by_{2}&=\bH_{2}\sqrt{\nu}\by_{1}+\bz_{2}\nn\\
&=\sqrt{\nu}\bH_{2}\bH_{1}\bx_{1}+\sqrt{\nu}\bH_{2}\bz_{1}+\bz_{2}\label{BasicSystemModel2},
\end{align}
where~\eqref{BasicSystemModell} and~\eqref{BasicSystemModel2} describe the users-relay and relay-BS input-output signal models, respectively. Specifically, $\by_{1}$ and $\by_{2}$ as well as $\bz_{1}\sim \cC\cN(\b0,\bI_{M})$ and $\bz_{2}\sim\cC\cN(\b0,\bI_{N})$ denote the received signals as well as the additive white Gaussian noise (AWGN) vectors at the relay and BS, respectively. Both channels representing the two hops assume  Rayleigh fast-fading channels, expressed by Gaussian matrices with independent and identically distributed (i.i.d.) complex circularly symmetric elements. Hence, $\bH_{1}  \in \bbC^{M \times K}\sim\cC\cN(\b0,\Id_{M}\otimes \Id_{K})$  is  the concatenated channel matrix between the $K$ users and the relay  exhibiting flat-fading, while  $\bH_{2}  \in \bbC^{N \times M}\sim\cC\cN(\b0,\Id_{N}\otimes \Id_{M})$ describes the channel matrix of the second hop.  In addition,  $\bx_{1}\in \mathcal{C}^{K\times 1}$ is the Gaussian vector  of symbols simultaneously transmitted by the $K$ users with $\EE\left[\bx_{1}\bx_{1}^{\H}\right]=\bQ_{1}=\frac{\rho}{K}\Id_{K}$, i.e., the individual signal-to-noise-ratio (SNR) equals to $\mu=\frac{\rho}{K}$. Note that before forwarding the received signal $\by_{1}$ at the relay, we have assumed that it is amplified by $\nu=\frac{\al}{M\left( 1+\rho \right)}$, where we have placed a per relay-antenna fixed power constraint $\frac{\al}{M}$, i,e., $\EE\left[\|\sqrt{\nu}\by_1\|^2\right]\le \al$, where the expectation is taken over all random variables.

Unfortunately, in practice, the transmitter, the relay, and the BS appear certain inevitable impairments such as I/Q imbalance~\cite{Schenk2008}. Although, mitigation schemes are incorporated in both the transmitter and the receiver, residual impairments  still emerge by means of additive distortion noises~\cite{Schenk2008,Studer2010}.  
Taking this into consideration, in each part of the system a transmit and/or receive impairment exists that  causes: 1) a mismatch between the intended signal and what is actually transmitted during the transmit processing, 2) and/or  a distortion of the received signal at the receiver side.

 Introduction of the residual additive transceiver impairments to~\eqref{BasicSystemModell} and~\eqref{BasicSystemModel2} provides the more general channel models for the respective links 
\begin{align}
&\!\!\!\by_{1}\!=\!\bH_{1}\!\left( \bx_{1}\!+\!\etv_{\mathrm{t}_1} \right)\!+\!\etv_{\mathrm{r}_1}\!+\!\bz_{1}\label{BasicSystemModel},\\
&\!\!\!\by_{2}\!=\!\bH_{2}\left( \sqrt{\nu}\by_{1}\!+\!\etv_{\mathrm{t}_2} \right)\!+\!\etv_{\mathrm{r}_2}+\bz_{2}\nn\\
&\!\!\!\!=\!\sqrt{\nu}\bH_{2}\bH_{1}\!\!\left( \bx_{1}\!+\!\etv_{\mathrm{t}_1}\! \right)\!+\!\bH_{2}\!\left( \! \sqrt{\nu}\!\left(\!\etv_{\mathrm{r}_1}\!\!+\!\bz_{1} \right)\!+\!\etv_{\mathrm{t}_2} \right)\!+\!\etv_{\mathrm{r}_2}\!+\!\bz_{2}\label{BasicSystemMode2},
\end{align}
where the additive terms $\etv_{\mathrm{t}i}$ and $\etv_{\mathrm{r}i}$ for $i=1,2$ are the distortion noises coming from the residual impairments in the transmitter and receiver of link $i$, respectively. Interestingly, this model allows to quantify the impact of the additive residual transceiver impairments, described in~\cite{Schenk2008,Studer2010}, on a DH AF system. Generally, the transmitter and the receiver distortion noises for the $i$th link are modeled as Gaussian distributed, where their average power is proportional to the average signal power, as shown by measurement results~\cite{Studer2010}. Mathematically speaking, we have 
\begin{align} 
 \etv_{\mathrm{t}_i}&\sim \cC\cN(\b0,\delta_{\mathrm{t}_i}^{2}\mathrm{diag}\left( q_{\mathrm{i}_1},\ldots,q_{T_\mathrm{i}} \right)),\\
 \etv_{\mathrm{r}_i}&\sim \cC\cN(\b0,\delta_{\mathrm{r}_i}^{2}\tr\left( \bQ_{i} \right)\Id_{R_\mathrm{i}})
\end{align}
with $T_\mathrm{i}$ and $R_\mathrm{i}$ being the numbers of transmit and receive antennas of link $i$, i.e., $T_\mathrm{1}=K\mathrm{,}~T_\mathrm{2}=M$ and $R_\mathrm{1}=M\mathrm{,}~R_\mathrm{2}=N$, while $\bQ_{i}$ is the transmit covariance matrix of the corresponding link with diagonal elements $q_{\mathrm{i}_1},\ldots,q_{T_\mathrm{i}}$. Moreover, $\delta_{\mathrm{t}_i}^{2}$ and $\delta_{\mathrm{t}_i}^{2}$ are proportionality parameters describing the
severity of the residual impairments in the transmitter and the receiver of link $i$. Especially, in practical applications, these parameters appear  as the error vector magnitudes (EVM) at each transceiver side~\cite{Holma2011}. Obviously, as far as the first hop is concerned, the additive transceiver impairments are expressed as
\begin{align}
 \etv_{\mathrm{t}_1}&\sim \cC\cN(\b0,\delta_{\mathrm{t}_1}^{2}\frac{\rho}{K}\Id_{K}),\\
 \etv_{\mathrm{r}_1}&\sim \cC\cN(\b0,\delta_{\mathrm{r}_1}^{2}\rho\Id_{M}).
\end{align}
Given that the input signal for the second hop is $\sqrt{\nu}\by_{1}$, the corresponding input covariance matrix is 
\begin{align}
 \bQ_{2}=\nu\EE\left[ \by_{1}\by_{1}^{\H}\right]&=\nu K \left( \mu+\delta_{\mathrm{t}_{1}}^{2} \mu+\delta_{\mathrm{r}_{1}}^{2}\mu +\frac{1}{K}\right)\Id_{M}\nn\\
 &=\tilde{\mu}\nu K\Id_{M},\label{constraint}
\end{align}
where $\tilde{\mu}=\left( \mu+\delta_{\mathrm{t}_{1}}^{2} \mu+\delta_{\mathrm{r}_{1}}^{2}\mu +\frac{1}{K}\right)$. Note that now, $\nu=\frac{\al}{K M \tilde{\mu}}$, after accounting for fixed gain relaying.  
Thus, the additive transceiver impairments for the second hop take the form
\begin{align}
  \etv_{\mathrm{t}_2}&\sim \cC\cN(\b0,\delta_{\mathrm{t}2}^{2}\tilde{\mu}\nu K \Id_{M}),\\
 \etv_{\mathrm{r}_2}&\sim \cC\cN(\b0,\delta_{\mathrm{r}_2}^{2}\tilde{\mu}\nu K M \Id_{N}).
\end{align}

\section{Ergodic Capacity Analysis}
\label{sec:3}
The capacity per receive antenna of this channel model  is given by  the following lemma, which takes~\eqref{BasicSystemMode2} into account.
\begin{lemma}\label{capacityGeneral}
The capacity per receive antenna of a DH  AF system in the presence of i.i.d. Rayleigh fading with residual additive transceiver hardware impairments under per user power constraints $[\bQ_{1}]_{k,k} \le \mu, \forall k=1\ldots K$ and~\eqref{constraint} is given by
\begin{align}
&\mathrm{C}=\frac{1}{N}\EE\left[\ln\det\left(\Id_{N}+\frac{\mu\nu}{B}\bH_2\bH_1\bH_1^H\bH_2^H\bPhi^{-1}\right)\right]\label{eq: first form}
\\&=\underbrace{\frac{1}{N}\EE\left[\ln\det\left(\bPhi+\frac{\mu\nu}{B}\bH_2\bH_1\bH_1^H{\bH}_2^H\right)\right]}_{\mathrm{C}_1}\nn\\
&-\underbrace{\frac{1}{N}\EE\left[\ln\det\left(\bPhi\right)\right]}_{\mathrm{C}_2},
\label{eq: generic capacity}
\end{align}
where $\bPhi=f_{2}\bH_{2}\bH_{1}\bH_{1}^{\H}\bH_{2}^{\H}+f_{3}\bH_{2}\bH_{2}^{\H}+\Id_{N}$ with $B=\delta_{\mathrm{r}_{2}}^{2}\tilde{\mu}\nu K M +1$, $f_1=\frac{f_{2}+f_{4}}{f_{3}}$,  $f_2={f_{4}\delta_{\mathrm{t}_{1}}^{2}}$,  $f_3=\frac{\nu\left( \delta_{\mathrm{t}_{2}}^{2} \tilde{\mu} K  +\delta_{\mathrm{r}_{1}}^{2} \mu K +1\right)}{B}$, and $f_{4}=\frac{\mu\nu}{B}$.  
\end{lemma}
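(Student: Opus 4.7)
The plan is to identify the desired-signal component of $\by_2$, collect everything else as effective noise, compute its (conditional) covariance, and then apply the standard Gaussian MIMO log-det mutual information formula.

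First I would rewrite~\eqref{BasicSystemMode2} as $\by_2=\sqrt{\nu}\bH_2\bH_1\bx_1+\bn_{\mathrm{eff}}$, where
\begin{align*}
\bn_{\mathrm{eff}}=\sqrt{\nu}\bH_2\bH_1\etv_{\mathrm{t}_1}+\sqrt{\nu}\bH_2\etv_{\mathrm{r}_1}+\bH_2\etv_{\mathrm{t}_2}+\sqrt{\nu}\bH_2\bz_1+\etv_{\mathrm{r}_2}+\bz_2.
\end{align*}
Since all five distortion vectors are mutually independent, zero-mean, Gaussian, and independent of $\bx_1$, conditional on $(\bH_1,\bH_2)$ the effective noise is Gaussian with covariance equal to the sum of the individual contributions. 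Using the distortion statistics quoted in the system model, namely $\etv_{\mathrm{t}_1}\sim\cC\cN(\b0,\delta_{\mathrm{t}_1}^2\mu\Id_K)$, $\etv_{\mathrm{r}_1}\sim\cC\cN(\b0,\delta_{\mathrm{r}_1}^2\mu K\Id_M)$, $\etv_{\mathrm{t}_2}\sim\cC\cN(\b0,\delta_{\mathrm{t}_2}^2\tilde{\mu}\nu K\Id_M)$, and $\etv_{\mathrm{r}_2}\sim\cC\cN(\b0,\delta_{\mathrm{r}_2}^2\tilde{\mu}\nu KM\Id_N)$, the conditional covariance is
\begin{align*}
\bK=\nu\delta_{\mathrm{t}_1}^2\mu\,\bH_2\bH_1\bH_1^{\H}\bH_2^{\H}+\nu\bigl(1+\delta_{\mathrm{r}_1}^2\mu K+\delta_{\mathrm{t}_2}^2\tilde{\mu} K\bigr)\bH_2\bH_2^{\H}+\bigl(1+\delta_{\mathrm{r}_2}^2\tilde{\mu}\nu KM\bigr)\Id_N.
\end{align*}

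Next I would factor out the scalar $B=1+\delta_{\mathrm{r}_2}^2\tilde{\mu}\nu KM$ to write $\bK=B\,\bPhi$, recognizing the bracketed matrix as exactly the $\bPhi$ in the lemma after substituting $f_2=\nu\delta_{\mathrm{t}_1}^2\mu/B=f_4\delta_{\mathrm{t}_1}^2$ and $f_3=\nu(1+\delta_{\mathrm{r}_1}^2\mu K+\delta_{\mathrm{t}_2}^2\tilde{\mu} K)/B$. Because the Gaussian inputs $\bx_1\sim\cC\cN(\b0,\mu\Id_K)$ produce a signal part with (conditional) covariance $\nu\mu\bH_2\bH_1\bH_1^{\H}\bH_2^{\H}$, the standard mutual-information formula for a Gaussian vector channel gives, per receive antenna,
\begin{align*}
\mathrm{C}=\frac{1}{N}\EE\!\left[\ln\det\!\Bigl(\Id_N+\nu\mu\,\bH_2\bH_1\bH_1^{\H}\bH_2^{\H}\bK^{-1}\Bigr)\right]=\frac{1}{N}\EE\!\left[\ln\det\!\Bigl(\Id_N+\tfrac{\mu\nu}{B}\bH_2\bH_1\bH_1^{\H}\bH_2^{\H}\bPhi^{-1}\Bigr)\right],
\end{align*}
which is~\eqref{eq: first form}. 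Finally, I would use the identity $\det(\Id+\bA\bPhi^{-1})=\det(\bA+\bPhi)/\det(\bPhi)$ inside the logarithm to split the expectation into $\mathrm{C}_1-\mathrm{C}_2$ as in~\eqref{eq: generic capacity}.

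The main obstacle is the bookkeeping: one must track six zero-mean Gaussian terms, confirm their pairwise independence (so that the covariances simply add), and check that the scalar identities defining $B$, $f_2$, $f_3$, $f_4$ (and the redundant $f_1$) match exactly after factoring out $B$. The rest is the mutual-information formula for a Gaussian channel with colored noise and a single determinant identity; no optimization over $\bQ_1$ is required because the per-user power constraints are taken with equality and the input is Gaussian, which is capacity-achieving for Gaussian noise conditional on the channel.
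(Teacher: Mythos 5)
Your proposal is correct and follows essentially the same route as the paper: identify \eqref{BasicSystemMode2} as the standard DH AF model with a colored effective noise whose conditional covariance is the sum of the independent distortion/noise contributions, normalize by $B$ to expose $\bPhi$, and invoke the Gaussian log-det mutual-information formula with the isotropic Gaussian input taken as optimal. Your version simply makes explicit the covariance bookkeeping and the determinant identity that the paper leaves implicit.
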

\begin{proof}
Given any channel realizations $\bH_{1}, \bH_{2}$ and transmit signal covariance matrices $\bQ_{1}$ and $\bQ_{2}$ at the user and relay sides, a close observation of~\eqref{BasicSystemMode2} shows that it is  an instance of the standard DH AF system model described by~\eqref{BasicSystemModel2}, but with a different noise covariance given by 
\begin{align}
 \bPhi&=\nu\delta_{\mathrm{t}_{1}}^{2}\bH_{2}\bH_{1}\mathrm{diag}\left( q_{\mathrm{1}_1},\ldots,q_{K} \right)\bH_{1}^{\H}\bH_{2}^{\H}\nn\\
 &+\bH_{2}\left( \left( \nu\delta_{\mathrm{r}_{1}}^{2}\tr \bQ_{1}+\nu \right)\Id_{N}+\nu\delta_{\mathrm{t}_{2}}^{2} \mathrm{diag}\left( q_{\mathrm{1}_1},\ldots,q_{M} \right) \right)\bH_{2}^{\H}\nn\\
 &+\left( \delta_{\mathrm{r}_{2}}^{2}\tr \bQ_{2}+1 \right)\Id_{N}.
 \end{align}
Taking into account for the optimality of the input signal $\bx_{1}$ because it is Gaussian distributed with covariance matrix $\bQ_1=\frac{\rho}{K}\Id_{K}$, the proof is concluded.
\end{proof}

Based on Lemma~\ref{capacityGeneral}, we are able to  investigate the impact of the additive transceiver impairments on DH AF systems in the case of  infinitely  large system dimensions.

\begin{remark}
Despite the resemblance of the ergodic capacity with transceiver impairments, given by~\eqref{eq: first form}, with the conventional ergodic capacity of a DH AF system~\cite[Eq.~2]{Jin2010}, this paper shows the fundamental differences that arise because the noise covariance matrix now depends on the combination of the channel matrices $\bH_{1},\bH_{2}$. 
\end{remark}

Employing the property $\det(\Id+\bA\bB)=\det(\Id+\bB \bA)$, $\mathrm{C}_1, \mathrm{C}_2$ can be  alternatively written as
\begin{align}
\!\!\!\mathrm{C}_1&\!=\!\frac{1}{N}\EE\!\left[\ln\det\left(\Id_{M}\!+\!f_{3}\bH_2^\H\bH_2\left( \Id_{M}\! +\!f_{1}\bH_{1}\bH_{1}^{\H}\right)\right)\right]\label{eq:first form2}\\
\!\!\!\mathrm{C}_2&\!=\!\frac{1}{N}\EE\!\left[\ln\det\!\left(\!\Id_{M}\!+\!f_3\bH_2^H\bH_2  \! \left(\! \Id_M\!+\!\frac{f_{2}}{f_{3}}\bH_1\bH_1^H \!\right)\! \!\right)\!\right]\!\!. \label{eq:second form2}
\end{align}

\section{Asymptotic Performance Analysis}
\label{sec:4}
This section presents the main results regarding the system performance in the large-antenna regime. Nevertheless, we account also for the scenario, where the number of users increases infinitely. Given that our interest is focused on channel matrices with dimensions tending to infinity, we employ tools from large RMT. Among the advantages of the ensuing analysis, we mention the achievement of deterministic results that make Monte Carlo simulations unnecessary. Moreover,  the asymptotic analysis can be quite accurate even for realistic system dimensions, while its convergence is rather fast as the channel matrices grow large. Thus, after defining $\beta\triangleq \frac{K}{M}$ and $\gamma\triangleq \frac{N}{M}$, the channel  capacity of the system under study is given by the following theorem.
\begin{Theorem}\label{Capacity}
 The capacity of a DH AF MIMO system in the presence of  i.i.d. Rayleigh fading channels with additive transceiver impairments, when the number of transmit users $K$ as well as relay and BS antennas ($M$ and $N$) tend to infinity with a given ratio, is given by
 \begin{align}
 \!\!\!\!\!\mathcal{C}\!\rightarrow\!\frac{1}{\gamma}\!\int_0^\infty \!\!\ln\left(1\!+\!f_{3}M x\right)\!\!\left(\!\! f^\infty_{\mathbf{K}_{\!f_{1}}/M}\!\left(x\right)\!-\!f^\infty_{\mathbf{K}_{\!\frac{f_{2}}{f_{3}}}/M}\!\left(x\right)\! \!\right)\!\!\mathrm{d}x,\!\!\!\!
\label{eq: capacity aepdf 1}
 \end{align}
where the  asymptotic eigenvalue probability density functions (a.e.p.d.f.)   $f^\infty_{\mathbf{K}_{f_1}/M}$ and $ f^\infty_{\mathbf{K}_{f_2/f_3}/M}$ are obtained by  the imaginary part of the corresponding Stieltjes transform \(\mathcal{S}\) for real arguments.
\end{Theorem}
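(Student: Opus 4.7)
The plan is to reduce both $\mathrm{C}_1$ and $\mathrm{C}_2$ from Lemma~\ref{capacityGeneral} to a common template and then apply large-dimensional free-probability machinery. Starting from the equivalent forms~\eqref{eq:first form2}--\eqref{eq:second form2}, both terms fit the pattern
$$\frac{1}{N}\EE\!\left[\ln\det\!\left(\bI_M + f_3\,\mathbf{K}_a\right)\right],\qquad \mathbf{K}_a \defeq \bH_2^\H\bH_2\bigl(\bI_M + a\,\bH_1\bH_1^\H\bigr),$$
with $a=f_1$ producing $\mathrm{C}_1$ and $a=f_2/f_3$ producing $\mathrm{C}_2$. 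Although $\mathbf{K}_a$ is not itself Hermitian, it is similar to the positive-semidefinite matrix $(\bI_M+a\bH_1\bH_1^\H)^{1/2}\bH_2^\H\bH_2(\bI_M+a\bH_1\bH_1^\H)^{1/2}$, so its spectrum is real and nonnegative. Writing the log-determinant as $\sum_i\ln(1+f_3 M x_i)$ with $x_i$ the eigenvalues of $\mathbf{K}_a/M$, and recasting the sum as an integral against the empirical spectral distribution (ESD) $F_{\mathbf{K}_a/M}$, one obtains
$$\frac{1}{N}\EE[\ln\det(\bI_M+f_3\mathbf{K}_a)] = \frac{1}{\gamma}\,\EE\!\int_0^\infty \!\ln(1+f_3Mx)\,\mathrm{d}F_{\mathbf{K}_a/M}(x).$$

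Next, I would invoke asymptotic freeness: since $\bH_1$ and $\bH_2$ are independent with i.i.d.\ Gaussian entries, Voiculescu's theorem guarantees that $\bH_2^\H\bH_2$ and $\bI_M+a\bH_1\bH_1^\H$ are asymptotically free as $M,N,K\to\infty$ with $N/M\to\gamma$ and $K/M\to\beta$. The ESD of $\bH_2^\H\bH_2/M$ converges almost surely to a Marchenko--Pastur law of ratio $\gamma$, while the ESD of $\bH_1\bH_1^\H/M$ converges to a Marchenko--Pastur law of ratio $\beta$, so that the ESD of $\bI_M+a\bH_1\bH_1^\H$ converges to a shifted Marchenko--Pastur law. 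The limiting distribution of $\mathbf{K}_a/M$ is then the free multiplicative convolution of the two limits, whose $S$-transform factorizes as a product of the individual $S$-transforms; inverting the $S$-transform--Stieltjes-transform relation yields $\mathcal{S}(z)$ as the root of a polynomial equation, and the a.e.p.d.f.\ is recovered from the standard inversion formula $f^\infty(x)=\tfrac{1}{\pi}\lim_{y\to 0^+}\Im\,\mathcal{S}(x+\mathrm{i}y)$.

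To conclude, I would pass to the large-system limit by replacing $\mathrm{d}F_{\mathbf{K}_a/M}$ with $f^\infty_{\mathbf{K}_a/M}(x)\,\mathrm{d}x$ in each of $\mathrm{C}_1$ and $\mathrm{C}_2$ and subtracting. Because both $f_3$ and $a$ are of order $1/M$ for $a\in\{f_1,f_2/f_3\}$, the factor $f_3 M$ stays bounded and the support of $f^\infty_{\mathbf{K}_a/M}$ remains compact, so each resulting integral is finite and their difference is precisely~\eqref{eq: capacity aepdf 1}. The main obstacle is the justification of the interchange of expectation, integral, and limit: the integrand $\ln(1+f_3 M x)$ grows without bound in $x$, so uniform integrability has to be established via standard tail bounds on the extreme singular values of $\bH_1$ and $\bH_2$ (controlling the upper tail) together with control of the mass of the ESD near the origin (controlling the logarithmic singularity). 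Once these estimates are in place, the almost-sure convergence of the ESDs and the uniform boundedness of the associated Shannon transforms justify commuting the limit with the ensemble average and complete the proof.
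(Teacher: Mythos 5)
Your proposal follows essentially the same route as the paper: the same reduction of $\mathrm{C}_1$ and $\mathrm{C}_2$ to the common template $\frac{1}{N}\EE[\ln\det(\Id_M+f_3\mathbf{K}_a)]$ with $a\in\{f_1,f_2/f_3\}$, the same appeal to asymptotic freeness of $\bH_2^\H\bH_2$ and $\Id_M+a\bH_1\bH_1^\H$ with multiplicative free convolution of $\mathrm{S}$-transforms built on Mar\v cenko--Pastur limits, and the same recovery of the a.e.p.d.f.\ from the imaginary part of the polynomially-characterized Stieltjes transform. The only differences are that you make explicit two technical points the paper leaves implicit (the Hermitization of $\mathbf{K}_a$ by similarity and the uniform-integrability argument needed to interchange limit, expectation, and integral), which strengthens rather than changes the argument.
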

\begin{proof}
See Appendix~\ref{ProofOfCapacity}.
\end{proof}

 \section{Numerical Results}
 \label{sec:5}
In order to validate our theoretical analysis, Fig. \ref{fig_1} provides the a.e.p.d.f. of $\mathbf{K}_{\alpha}/M$ given by~\eqref{ka}. In particular, the histogram represents the p.d.f. of the matrix $\mathbf{K}_{\alpha}/M$
calculated numerically based on Mote Carlo (MC) simulations. Further, the solid line denotes the a.e.p.d.f. obtained by solving the polymonial (\ref{eq: polK}) for Stieltjes transform, and then applying Lemma 4.
From the result, we can observe a perfect agreement between the results obtained from theoretical analysis and MC simulations.

\begin{figure} [!h]
\begin{center}
\includegraphics[width=2.9 in]{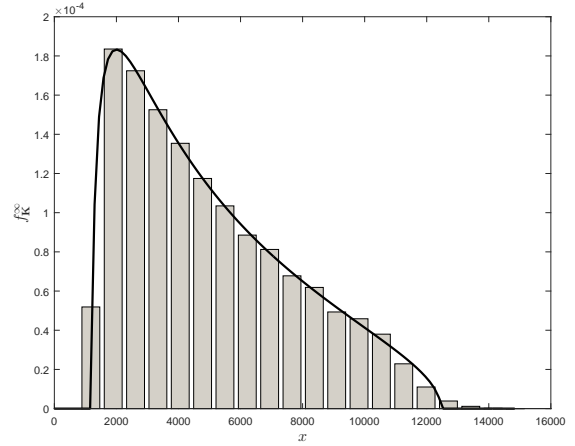}
\caption{\footnotesize{A.e.p.d.f. of $\mathbf{K}_{\alpha}/M$ ($\mu=\nu=$ 20 dB, $\beta=5$, $\gamma=10$, $N=100$, $M=10$, $K=50$)}}
\label{fig_1}
\end{center}
\vspace{-5 pt}
\end{figure}

In Fig. \ref{fig_2and3combined}, we plot the theoretical and simulated per-antenna ergodic capacities versus the   transmit SNR, i.e., $\mu$ for the following two cases: (i) without impairments, and (ii) with impairments on transmitter and receiver of both links. From the figure, it can be noted that theoretical and simulated capacity curves for both the considered cases match perfectly. Furthermore, the per-antenna capacity increases with the increase in the value of $\mu$ in the absence of impairments, i.e., $\delta_{t_1}=\delta_{t_2}=\delta_{r_1}=\delta_{r_2}=0$ as expected. Moreover, the most important observation is that the per-antenna capacity saturates after a certain value of $\mu$ in the presence of impairments. The trend of per-antenna capacity decrease with respect to $\mu$ in Fig. \ref{fig_2and3combined} is well aligned with the result obtained in \cite{Zhang2014} for the case of MIMO systems. 
However, for the considered scenario in this paper, an early saturation of the capacity in the presence of impairments is noted due to the  introduction of the relay node impairments. Nevertheless, in Fig. \ref{fig_2and3combined}, we also illustrate the effect of different values of impairments on the capacity considering the values of $\delta_{t_1}=\delta_{t_2}=\delta_{r_1}=\delta_{r_2}=\delta$ as $0.01$, $0.08$ and $0.15$. Specifically, it can be observed that with the increase in the value of impairments, the saturation point appears earlier, i.e., at  lower values of $\mu$.    


\begin{figure}[!h]
\begin{center}
\includegraphics[width=2.9 in]{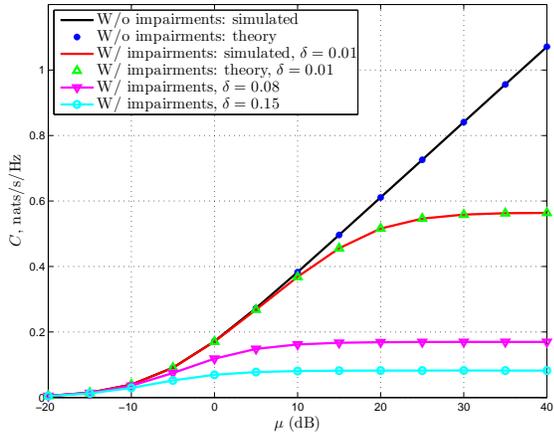}
\caption{\footnotesize{Per-antenna ergodic capacity versus $\mu$ 
($\nu=20$ dB, $\beta=5$, $\gamma=10$, $N=100$, $M=10$, $K=50$, $\delta_{t_1}=\delta_{t_2}=\delta_{r_1}=\delta_{r_2}=\delta$)}}
\label{fig_2and3combined}
\end{center}
\vspace{-5 pt}
\end{figure}

Figures \ref{fig_6and7combined1} and \ref{fig_6and7combined2} present the per-antenna capacity versus SNR levels $\mu$, $\nu$ in the presence and the absence of impairments, respectively. It can be observed that in the absence of impairments, the capacity increases monotonically with both $\mu$ and $\nu$, with the slope being steeper for the case of $\mu$. However, in the presence of impairments, clear saturation points can be noted with the increase in the values of $\mu$ and $\nu$ after  certain values. 

As far as Figs. \ref{fig_8and9combined1} and \ref{fig_8and9combined2} are concerned, we plot the per-antenna capacity versus the channel dimensions $\gamma$ and $\beta$ in the absence and the presence of channel impairments, respectively. In both cases, the capacity increases monotonically with both $\beta$ and $\frac{1}{\gamma}$, however, the slope with respect to $\frac{1}{\gamma}$ is steeper as compared to the slope with $\beta$. This trend remains almost the same in the presence of impairments, but the slope of the capacity curve with respect to $\frac{1}{\gamma}$ in Fig. \ref{fig_8and9combined2} is observed to be less steeper  than in Fig. \ref{fig_8and9combined1} at higher values of $\frac{1}{\gamma}$.

\begin{figure}
\begin{center}
\subfigure[]{
\includegraphics[scale=.39]{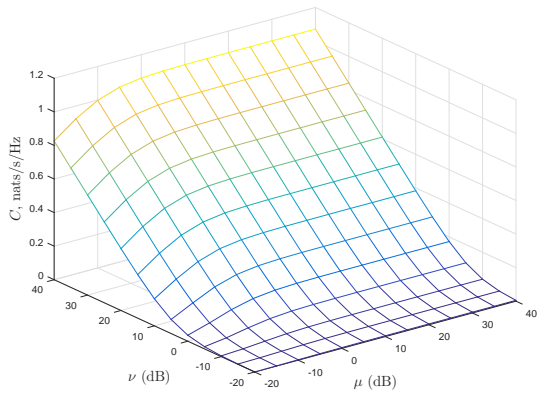}\label{fig_6and7combined1}

} \subfigure[]{
\includegraphics[scale=.39]{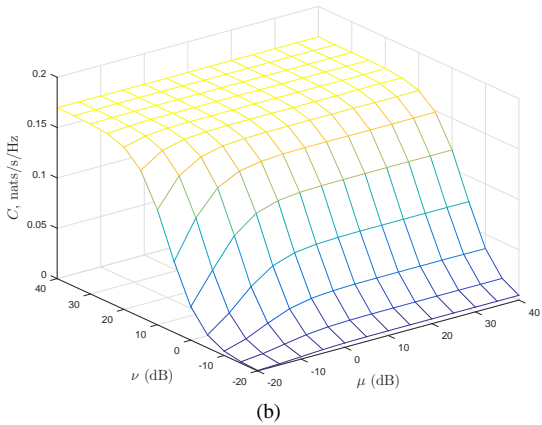}
\label{fig_6and7combined2}
}
\end{center}
\caption{Per-antenna ergodic capacity versus SNR levels   
$\mu$, $\nu$ ($\beta=5$, $\gamma=10$, $N=100$, $M=10$, $K=50$, (a)  $\delta_{t_1}=\delta_{t_2}=\delta_{r_1}=\delta_{r_2}=0$, (b) $\delta_{t_1}=\delta_{t_2}=\delta_{r_1}=\delta_{r_2}=0.08$)}
\vspace{-10 pt}
\end{figure}



\begin{figure}
\begin{center}
\subfigure[]{
\includegraphics[scale=.4]{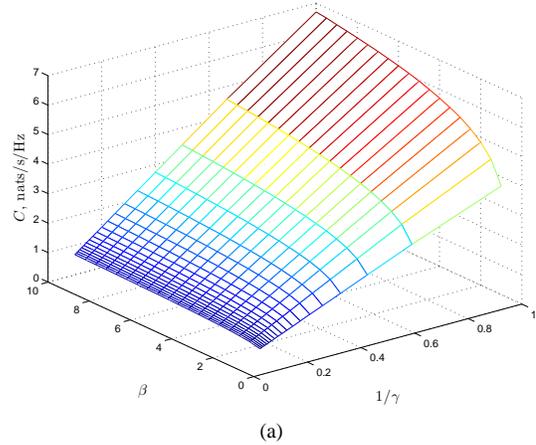}\label{fig_8and9combined1}

} \subfigure[]{
\includegraphics[scale=.4]{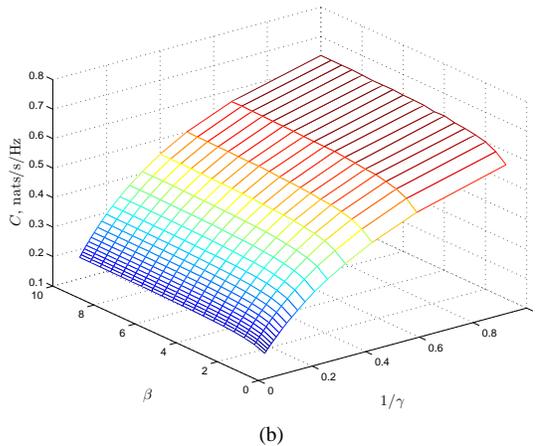}\label{fig_8and9combined2}

}
\end{center}
\caption{Per-antenna ergodic capacity versus channel dimensions $\beta$, $\gamma$ ($\mu=\nu=20$ dB, (a) $\delta_{t_1}=\delta_{t_2}=\delta_{r_1}=\delta_{r_2}=0$, (b) $\delta_{t_1}=\delta_{t_2}=\delta_{r_1}=\delta_{r_2}=0.08$)}
\vspace{-10 pt}
\end{figure}


\section{Conclusions}
\label{sec:6}
This paper  presented a thorough investigation of the impact of residual additive hardware impairment on the ergodic capacity of DH AF MIMO  relay channels by considering a large system  analysis. Specifically, we derived the ergodic capacity by employing the theory of FP. Moreover, we proceeded with the study of the effects of hardware impairments, when the number of antennas becomes large as massive MIMO architecture demands. Notably, we demonstrated the quantification of the degradation due to the additive RF impairments on the ergodic capacity by varying the  system parameters such as the SNR, the number of antennas, and the quality of the RF equipment. Nevertheless, the validation of the analytical results was shown  by means of simulations. In particular, simulations depicted that the asymptotic results can be applicable even for contemporary system dimensions.

\begin{appendices}
\section{Useful Lemmas}
Herein, given the eigenvalue probability distribution function $f_\bX(x)$ of a matrix  $\bX$, we provide useful definitions and lemmas, found in~\cite{Tulino04}, that are considered during our analysis. In the following definitions, $\delta$ is a nonnegative real number.
\begin{definition}[Shannon transform{\cite[Definition~2.12]{Tulino04}}]
\label{def: Shannon transform}
The Shannon transform of a positive semidefinite matrix \(\bX\) is defined as
\begin{align}
\mathcal{V}_{\bX}\left(\delta\right)=\int_0^\infty\ln\left({1+\delta x}\right)f_\mathbf{X}(x)\mathrm{d}x.
\end{align}
\end{definition}
\begin{definition}[$\eta$-transform{\cite[Definition~2.11]{Tulino04}}]
\label{def: n transform}
The $\eta$-transform of a positive semidefinite matrix \(\bX\) is defined as
\begin{align}
\eta_{\bX}\left(\delta\right)=\int_0^\infty\frac{1}{1+\delta x}f_\mathbf{X}(x)\mathrm{d}x.
\label{eq: eta def}
\end{align}
\end{definition}
\begin{definition}
\label{def: Sigma transform}[$\mathrm{S}$-transform{\cite[Definition~2.15]{Tulino04}}]
The $\mathrm{S}$-transform of a positive semidefinite matrix \(\bX\) is defined as\begin{align}
\Sigma_{\mathbf{X}}(x)=-\frac{x+1}{x}\eta^{-1}_{\mathbf{X}}(x+1).
\end{align}
\end{definition}
\begin{definition}[The Mar\v cenko-Pastur law density function {\cite{Bai1}}]\label{lemma:marcenko}
\label{df: MP law}
Given a Gaussian $K\times M$ channel matrix $\bH\sim\mathcal{C}\mathcal{N}\left( \mathbf{0},\Id \right)$, the a.e.p.d.f. of \(\frac{1}{K}\bH\bH^H\)
converges almost surely (a.s.) to the non-random limiting eigenvalue distribution
of the Mar\v cenko-Pastur law  given by
\begin{align}
        f^\infty_{\frac{1}{K}\bH\bH^H}(x)=\left(1-\beta\right)^+\left(x\right)+\frac{\sqrt{\left(x-a\right)^+\left(b-x\right)^+}}{2\pi x},\label{eq: MP aepdf}
\end{align}
where $a=(1-\sqrt{\beta})^2\mathrm{,}~b=(1+\sqrt{\beta})^2$, $\beta=\frac{K}{M}$, and $\delta\left( x \right)$ is Dirac's delta function.
\end{definition}

\begin{lemma}[{\cite[Eqs. 2.87, 2.88]{Tulino04}}]
\label{sigma transform N}
The $\mathrm{S}$-transform of the matrix  \(\frac{1}{K}\bH^{\H}\bH\) is expressed as
\begin{align}
\Sigma_{\frac{1}{K}\bH^H\bH}\left(x,\beta \right)&=\frac{1}{1+\beta x}\label{eq: sigma transform N1},
\end{align}
while the $\mathrm{S}$-transform of the matrix \(\frac{1}{K}\bH\bH^H\)  is obtained as
\begin{align}
\Sigma_{\frac{1}{K}\bH\bH^H}\left(x,\beta \right)&=
\frac{1}{\beta+x}.
\label{eq: sigma transform N},
\end{align}
\end{lemma}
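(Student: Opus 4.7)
The plan is to verify both identities directly from Definitions 2 and 3 applied to the Marchenko--Pastur density of Definition 4. I would first establish $\Sigma_{\frac{1}{K}\bH\bH^\H}(x)=\frac{1}{\beta+x}$ and then transfer to $\frac{1}{K}\bH^\H\bH$ via the standard spectrum-swap relation between $\bH\bH^\H$ and $\bH^\H\bH$.

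For the first formula, inserting the MP density from Definition 4 into Definition 2 gives
\begin{align*}
\eta_{\frac{1}{K}\bH\bH^\H}(\gamma)=(1-\beta)^{+}+\int_{a}^{b}\frac{\sqrt{(x-a)(b-x)}}{2\pi x\,(1+\gamma x)}\,\mathrm{d}x,
\end{align*}
with $a=(1-\sqrt{\beta})^{2}$ and $b=(1+\sqrt{\beta})^{2}$. The integral is classical: either the substitution $x=1+\beta-2\sqrt{\beta}\cos\theta$ reduces it to a Poisson-type integral, or one reads it off from the MP Stieltjes transform via $\eta(\gamma)=\gamma^{-1}\mathcal{S}(-\gamma^{-1})$ together with the quadratic fixed-point equation that $\mathcal{S}$ is known to satisfy. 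Either route yields $\eta$ as an explicit algebraic function of $\gamma$, with the branch of the square root pinned down by the normalization $\eta(0)=1$. Solving that relation for $\gamma$ in terms of $y=\eta(\gamma)$ produces a rational closed form for $\eta^{-1}$, and substituting it into $\Sigma(x)=-\frac{x+1}{x}\eta^{-1}(x+1)$ collapses, after cancellation, to $\frac{1}{\beta+x}$.

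For the second formula I would invoke the fact that $\bH\bH^\H$ (size $K\times K$) and $\bH^\H\bH$ (size $M\times M$) share all nonzero eigenvalues. Counting the mass each nonzero eigenvalue contributes to the empirical density on each side yields $f_{\frac{1}{K}\bH^\H\bH}(x)=\beta\,f_{\frac{1}{K}\bH\bH^\H}(x)$ on $(0,\infty)$, with an atom at $0$ absorbing the dimension imbalance. This produces the affine relation $\eta_{\frac{1}{K}\bH^\H\bH}(\gamma)=(1-\beta)+\beta\,\eta_{\frac{1}{K}\bH\bH^\H}(\gamma)$ in the regime $\beta\le 1$, with the symmetric expression when $\beta>1$; it is trivially invertible, and re-feeding the new $\eta^{-1}$ into Definition 3 converts the first S-transform into $\frac{1}{1+\beta x}$.

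The main technical obstacle is the closed-form evaluation of the MP $\eta$-transform and the correct branch selection when inverting it; once those are in hand, both the algebraic collapse to the stated $\Sigma$'s and the spectrum-swap passage between the two matrices reduce to elementary bookkeeping.
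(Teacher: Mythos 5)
The paper never actually proves this lemma: it is imported verbatim from \cite[Eqs.~2.87, 2.88]{Tulino04}, so re-deriving it from Definitions~2--4, as you propose, is a legitimate (and more self-contained) route, and your skeleton is exactly the textbook one. Your first step is sound: the Mar\v{c}enko--Pastur $\eta$-transform satisfies $\gamma\eta^{2}+\gamma(\beta-1)\eta+\eta-1=0$, which is \emph{linear} in $\gamma$, so $\eta^{-1}(y)=\frac{1-y}{y\,(y-1+\beta)}$ is rational with no branch ambiguity left after imposing $\eta(0)=1$, and Definition~3 then collapses $\Sigma(x)=-\frac{x+1}{x}\,\eta^{-1}(x+1)$ to $\frac{1}{\beta+x}$ exactly as you claim.

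There is, however, a genuine bookkeeping error in your second step: your two steps place the atom at zero on \emph{different} matrices. In your first display you include the mass $(1-\beta)^{+}$ in $\eta_{\frac{1}{K}\bH\bH^{\H}}$ (following Definition~4 literally), while your rank-counting argument correctly puts the atom on the larger Gram matrix $\bH^{\H}\bH$ and scales the bulk by $\beta$. These are mutually exclusive, and composing your two displayed identities double-counts the zero-eigenvalue mass: substituting $\eta_{\frac{1}{K}\bH\bH^{\H}}=(\,\eta_{\frac{1}{K}\bH^{\H}\bH}-(1-\beta))/\beta$ into the quadratic above yields $\eta^{-1}_{\frac{1}{K}\bH^{\H}\bH}(y)=\frac{\beta(1-y)}{(y-1+\beta)(y-1+\beta^{2})}$, hence $\Sigma_{\frac{1}{K}\bH^{\H}\bH}(x)=\frac{\beta(x+1)}{(x+\beta)(x+\beta^{2})}\neq\frac{1}{1+\beta x}$. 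The repair is that exactly one of the two Gram matrices carries the atom (the one of larger dimension, here the $M\times M$ one when $\beta\le 1$), the companion law is atom-free with bulk mass one, and the affine relation must run \emph{toward} the atom-bearing matrix, i.e.\ $\eta_{\mathrm{large}}(\gamma)=(1-\beta)+\beta\,\eta_{\mathrm{small}}(\gamma)$ --- the reverse of what you wrote. With that orientation, setting $\eta_{\mathrm{large}}=1-\beta+\beta v$ in the quadratic factors cleanly to $\gamma=\frac{1-v}{v\,(1-\beta+\beta v)}$, and Definition~3 gives $\frac{1}{1+\beta x}$ for the atom-free matrix, so both stated transforms do come out. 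In fairness, the confusion is partly inherited: the paper's own Definition~4 attaches the atom $(1-\beta)^{+}$ to the $K\times K$ matrix $\frac{1}{K}\bH\bH^{\H}$, although for $\bH\in\bbC^{K\times M}$ with $\beta=K/M<1$ that matrix is almost surely full rank; once you fix which matrix the atom-bearing law describes and orient the $\eta$ relation accordingly, your derivation goes through.
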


\begin{lemma}[{\cite[Eq.~2.48]{Tulino04}}]
\label{pro: Sigma n}
The Stieltjes-transform of a positive semidefinite matrix \(\bX\) can be derived by its $\eta$-transform according to
\begin{align}
\mathcal{S}_{\mathbf{X}}(x)=-\frac{\eta_{\mathbf{X}}(-1/x)}{x}.
\end{align}
\end{lemma}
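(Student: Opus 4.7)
The plan is to verify the identity by direct substitution in the integral representations of both transforms. I would start from Definition~\ref{def: n transform}, which gives $\eta_{\mathbf{X}}(\gamma)=\int_0^\infty(1+\gamma\lambda)^{-1}f_{\mathbf{X}}(\lambda)\,d\lambda$, together with the standard definition of the Stieltjes transform of a positive semidefinite matrix, $\mathcal{S}_{\mathbf{X}}(x)=\int_0^\infty(\lambda-x)^{-1}f_{\mathbf{X}}(\lambda)\,d\lambda$, valid for $x$ outside the support of $f_{\mathbf{X}}$. Both objects are defined as integrals against the same spectral density, so the identity reduces to showing that the two integrands coincide after a suitable reparametrisation.

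The key step is a change of argument: setting $\gamma=-1/x$ in the $\eta$-transform turns the integrand $1/(1+\gamma\lambda)$ into $x/(x-\lambda)$, so that $\eta_{\mathbf{X}}(-1/x)=\int_0^\infty x(x-\lambda)^{-1}f_{\mathbf{X}}(\lambda)\,d\lambda$. Dividing both sides by $-x$ and absorbing the sign into the denominator yields $-\eta_{\mathbf{X}}(-1/x)/x=\int_0^\infty(\lambda-x)^{-1}f_{\mathbf{X}}(\lambda)\,d\lambda=\mathcal{S}_{\mathbf{X}}(x)$, which is exactly the claimed identity. Since $\mathbf{X}$ is positive semidefinite, the support of $f_{\mathbf{X}}$ lies in $[0,\infty)$ and the manipulations are justified whenever the two sides are both finite.

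The only genuine obstacle is bookkeeping of sign conventions. Different references place the Stieltjes pole at $x-\lambda$ rather than $\lambda-x$, so I would pin down the convention used in~\cite{Tulino04} and make sure it is the same one invoked later in Theorem~\ref{Capacity}, where the imaginary part of $\mathcal{S}$ at real arguments recovers the a.e.p.d.f.\ via the Stieltjes--Perron inversion formula. Once that convention is fixed, the proof reduces to the one-line substitution above, and no analytical subtleties arise beyond requiring $x$ not to lie in the support of $f_{\mathbf{X}}$ (otherwise the equality is interpreted via boundary values from the upper half-plane).
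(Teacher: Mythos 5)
Your verification is correct, and it is worth noting that the paper itself offers no proof of this lemma at all: it is imported verbatim from \cite[Eq.~2.48]{Tulino04}, so your one-line substitution is exactly the argument the citation stands in for. The computation checks out: with $\gamma=-1/x$ one has $\bigl(1+\gamma\lambda\bigr)^{-1}=x/(x-\lambda)$, hence $\eta_{\mathbf{X}}(-1/x)=\int_0^\infty \frac{x}{x-\lambda}f_{\mathbf{X}}(\lambda)\,\mathrm{d}\lambda$ and $-\eta_{\mathbf{X}}(-1/x)/x=\int_0^\infty\frac{f_{\mathbf{X}}(\lambda)}{\lambda-x}\,\mathrm{d}\lambda$, which is the Stieltjes transform in the convention of \cite{Tulino04}. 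Your concern about sign conventions resolves in your favor: the $(\lambda-x)^{-1}$ convention is the one under which the inversion formula of Lemma~\ref{pro: Stieltjes to aepdf} holds as stated, since for $z=x+\mathrm{j}y$ one gets $\mathfrak{I}\{\mathcal{S}_{\mathbf{X}}(z)\}=\int_0^\infty \frac{y}{(\lambda-x)^2+y^2}f_{\mathbf{X}}(\lambda)\,\mathrm{d}\lambda\rightarrow\pi f^{\infty}_{\mathbf{X}}(x)$ as $y\rightarrow 0^+$, recovering the density with the factor $1/\pi$ used later in the proof of Theorem~\ref{Capacity}. One small refinement: Definition~\ref{def: n transform} defines $\eta_{\mathbf{X}}(\delta)$ only for nonnegative real $\delta$, so evaluating it at $-1/x$ strictly speaking invokes the analytic continuation of the integral formula; for real $x<0$ the argument $-1/x$ is positive and the identity holds literally, and since both sides are analytic off the support of $f_{\mathbf{X}}$ (contained in $[0,\infty)$ by positive semidefiniteness), the identity extends there, with boundary values from the upper half-plane handling real $x$ inside the support --- precisely the regime exploited in Lemma~\ref{pro: Stieltjes to aepdf}.
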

\begin{lemma}[{\cite[Eq.~2.45]{Tulino04}}]
\label{pro: Stieltjes to aepdf}
The a.e.p.d.f. of $\mathbf{X}$ is obtained by  the imaginary part of the Stieltjes transform \(\mathcal{S}\) for real arguments as
\begin{align}
f^{\infty}_\mathbf{X}(x)=\lim_{y\rightarrow0^+}\frac{1}{\pi}\mathfrak{I}\left\{ \mathcal{S}_\mathbf{X}(x+\mathrm{j}y) \right\}.
\label{eq: limiting eigenvalue pdf}
\end{align}
\end{lemma}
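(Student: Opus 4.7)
The plan is to rewrite each of $C_1$ and $C_2$ from \eqref{eq:first form2}--\eqref{eq:second form2} as the Shannon transform of a single random matrix family, pass to the large-system limit with the free-probability tools of the appendix, and then subtract. Set $\mathbf{K}_\alpha \triangleq \mathbf{H}_2^{\H}\mathbf{H}_2(\mathbf{I}_M+\alpha\mathbf{H}_1\mathbf{H}_1^{\H})$, so that \eqref{eq:first form2}--\eqref{eq:second form2} share the common form $C_i = \tfrac{1}{N}\mathbb{E}\bigl[\ln\det(\mathbf{I}_M + f_3 \mathbf{K}_{\alpha_i})\bigr]$ with $\alpha_1 = f_1$ and $\alpha_2 = f_2/f_3$. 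Expanding the log-determinant over the eigenvalues of $\mathbf{K}_{\alpha_i}/M$ and using $M/N = 1/\gamma$ yields
\begin{align*}
C_i \;=\; \frac{1}{\gamma}\,\mathbb{E}\!\left[\int_{0}^{\infty} \ln\!\bigl(1 + f_3 M x\bigr)\, f_{\mathbf{K}_{\alpha_i}/M}(x)\,\mathrm{d}x\right],
\end{align*}
so that the difference $\mathcal{C} = C_1 - C_2$ already has the integrand structure of \eqref{eq: capacity aepdf 1}, with random empirical densities in place of the asymptotic ones.

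Next I would pass to the limit $M \to \infty$ with $\beta = K/M$ and $\gamma = N/M$ held fixed. The two building blocks $\tfrac{1}{M}\mathbf{H}_2^{\H}\mathbf{H}_2$ and $\mathbf{I}_M + \alpha\mathbf{H}_1\mathbf{H}_1^{\H}$ are independent and unitarily invariant, hence asymptotically free; each carries a known limiting spectrum (Mar\v cenko--Pastur from Definition~\ref{lemma:marcenko} for the first, and a translated Mar\v cenko--Pastur for the second). Multiplicativity of the $\mathrm{S}$-transform together with Lemma~\ref{sigma transform N} therefore produces $\Sigma_{\mathbf{K}_\alpha/M}$ in closed form; inverting back through the $\eta$-transform and Lemma~\ref{pro: Sigma n} delivers a polynomial equation for $\mathcal{S}_{\mathbf{K}_\alpha/M}$, and Lemma~\ref{pro: Stieltjes to aepdf} extracts the limiting density $f^{\infty}_{\mathbf{K}_\alpha/M}$ as the scaled imaginary part on the real line. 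Since the prefactors $f_3 M$, $f_1 M$ and $(f_2/f_3)M$ all tend to finite constants as $M \to \infty$ (the $KM$ scalings carried by $\nu$ and $B$ cancel), the matrix $\mathbf{K}_{\alpha_i}/M$ possesses a well-defined limiting spectrum on a bounded support, the integrand $\ln(1+f_3 M x)$ is uniformly bounded on compact $x$-sets, and dominated convergence licences the interchange of limit, expectation and integral. Subtracting the resulting two limits yields \eqref{eq: capacity aepdf 1}.

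The main obstacle I expect lies in the $\mathrm{S}$-transform step: the identity shift inside $\mathbf{K}_\alpha$ forces one to first derive $\eta_{\mathbf{I}_M + \alpha \mathbf{H}_1 \mathbf{H}_1^{\H}}$ from the Mar\v cenko--Pastur $\eta$-transform via the translation rule $\eta_{\mathbf{I}+\mathbf{A}}(\delta) = \mathbb{E}\bigl[(1 + \delta(1+\lambda_\mathbf{A}))^{-1}\bigr]$, convert it to an $\mathrm{S}$-transform, multiply by $\Sigma_{\tfrac{1}{M}\mathbf{H}_2^{\H}\mathbf{H}_2}$ from Lemma~\ref{sigma transform N}, and then invert the product back through the $\eta$- and Stieltjes transforms. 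Selecting the correct analytic branch of the resulting polynomial so that $\mathcal{S}_{\mathbf{K}_\alpha/M}$ maps $\mathbb{C}^{+}$ into $\mathbb{C}^{+}$, as required for Lemma~\ref{pro: Stieltjes to aepdf} to return a bona fide density, is the technical crux. A secondary care-point is the a.s.\ concentration of the random Shannon transform around its deterministic limit, which follows from standard linear-spectral-statistics concentration bounds for Gaussian matrices and need only be invoked at the end of the argument.
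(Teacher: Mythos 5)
Your proposal does not prove the statement at hand. The statement is Lemma~\ref{pro: Stieltjes to aepdf}, the classical Stieltjes--Perron inversion formula, which the paper itself does not prove but simply quotes from \cite[Eq.~2.45]{Tulino04}. What you have written is instead a proof sketch of Theorem~\ref{Capacity}: you decompose the capacity into $\mathrm{C}_1-\mathrm{C}_2$, pass to the large-system limit, invoke asymptotic freeness and $\mathrm{S}$-transform multiplicativity, and then \emph{use} Lemma~\ref{pro: Stieltjes to aepdf} as a black box (``Lemma~\ref{pro: Stieltjes to aepdf} extracts the limiting density''). Invoking the lemma inside an argument for a different result is not a proof of the lemma: nowhere do you establish why the imaginary part of the Stieltjes transform, evaluated just above the real axis, recovers the spectral density. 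This is a genuine gap --- the target statement is left entirely unaddressed.

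A correct blind proof is short and elementary. Starting from the definition $\mathcal{S}_{\mathbf{X}}(z)=\int \frac{1}{\lambda - z}\,f^{\infty}_{\mathbf{X}}(\lambda)\,\mathrm{d}\lambda$ and writing $z=x+\mathrm{j}y$ with $y>0$, one computes
\begin{align*}
\frac{1}{\pi}\,\mathfrak{I}\left\{\mathcal{S}_{\mathbf{X}}(x+\mathrm{j}y)\right\}
=\int \frac{1}{\pi}\,\frac{y}{(\lambda-x)^{2}+y^{2}}\,f^{\infty}_{\mathbf{X}}(\lambda)\,\mathrm{d}\lambda,
\end{align*}
which is the Cauchy--Poisson kernel regularization of the density; since this kernel is an approximate identity as $y\rightarrow 0^{+}$, the integral converges to $f^{\infty}_{\mathbf{X}}(x)$ at every continuity point of the density, which is exactly~\eqref{eq: limiting eigenvalue pdf}. (Your remark about choosing the branch of the polynomial so that $\mathcal{S}$ maps $\mathbb{C}^{+}$ into $\mathbb{C}^{+}$ is relevant to \emph{applying} the lemma in Theorem~\ref{Capacity}, not to proving it.) Since the paper disposes of the lemma by citation, the appropriate benchmark is the standard inversion argument above, and your text, whatever its merits as an outline of the capacity theorem, does not meet it.
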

\section{Proof of Theorem~\ref{Capacity}}\label{ProofOfCapacity}

The asymptotic limits of the capacity terms~\eqref{eq:first form2} and~\eqref{eq:second form2}, when the channel dimensions tend to infinity, while keeping their  ratios  $\beta$ and $\gamma$ fixed,  are expressed by means of principles of FP theory in terms of a generic expression as 
\begin{align}
\mathrm{C}_i&\!=\!\frac{1}{N}\!\lim_{K,M,N\rightarrow\infty}\!\!\!\!\EE\!\left[\ln\det\!\left(\Id_{M}\!+\!f_{3}\bH_2^H\bH_2\!\left(\Id_{M}\!+\!\al\bH_\mathrm{1}\bH_\mathrm{1}^H\right)\!\right)\!\right]\nn\\
&\!=\!\frac{M}{N}\!\lim_{K,M,N\rightarrow\infty}\!\!\!\!\EE\!\left[\frac{1}{M}\!\sum_{i=1}^M \ln\!\left(\!1\!+\!f_{3}M\lambda_i\left(\frac{1}{M}\mathbf{K}_{\al}\!\right)\!\right)\!\right]\nn\\
&\!\rightarrow\!\frac{1}{\gamma}\!\int_0^\infty \ln\!\left(1\!+\!f_{3}M x\right)f^\infty_{\mathbf{K}_{\al}/M}\!\left(x\right)\mathrm{d}x,
\label{eq: capacity aepdf 1}
\end{align}
where $\mathrm{C}_i$ corresponds to $\mathrm{C}_1$ or $\mathrm{C}_2$ depending on the value of $i$, i.e., if $\al=f_1$ or if $\al=f_{2}/f_{3}$, respectively. In addition,   $\lambda_i\left(\mathbf{X}\right)$ is the $i$th ordered eigenvalue of matrix $\mathbf{X}$, and $f^\infty_{\mathbf{X}}$ denotes the asymptotic eigenvalue probability density function (a.e.p.d.f.) of $\mathbf{X}$. Moreover, for the sake of simplification of our analysis, we have made use of the following variable definitions similar to~\cite{Chatzinotas2013}
\begin{align}
 \tilde{\bM}_{\al}&=\Id_{M}+\al\bH_{1}\bH_{1}^{H}\\
 \tilde{\bN}_{1}&=\bH_{1}\bH_{1}^{\H}\\
   \tilde{\bN}_{2}&=\bH_{2}^{\H}\bH_{2}\\
{\bK}_{\al}&=\bH_{2}^{\H}\bH_{2}\left( \Id_{M}+\al\bH_{1}\bH_{1}^{H} \right)=\tilde{\bN}_{2}\tilde{\bM}_{\al}\label{ka}.
\end{align}

The a.e.p.d.f. of $\bK_{\al}/M$  can be obtained by means of Lemma~\ref{pro: Stieltjes to aepdf}, which demands its Stieltjes transform. In the following, we describe the steps leading to the derivation of the desired Stieltjes transform of $\bK_{\al}/M$. 
First, we employ Lemma~\ref{pro: Sigma n}, and take the inverse of $\eta$-transform of $\bK_{\al}/M$ as
\begin{align}
 x \eta^{-1}_{\bK_{\al}/M}\left(-x \mathcal{S}_{\bK_{\al}/M}\left( x \right) \right)+1=0.\label{lemma2a}
\end{align}
Thus, we now focus on the derivation of $ \eta^{-1}_{\bK_{\al}/M}\left(x  \right)$. 
\begin{proposition}
\label{Theorem: ieta transform K}
The inverse $\eta$-transform of $\mathbf{K}_{\al}/M$ is given by
\begin{align}
\eta^{-1}_{\mathbf{K}_{\al/M}}(x)=\Sigma_{\mathbf{\tilde N_{2}}/M}(x-1)\eta^{-1}_{\tilde{\bM}_{\al}/M}(x).
\label{eq: inverse eta transform K}
\end{align}
\begin{proof}
The inverse of the $\eta$-transform of $\mathbf{K}_{\al}/M$ is given by means of the free convolution 
\begin{align}
\!\!\Sigma_{\mathbf{K}_{\al}/M}(x)&\!=\!\Sigma_{\mathbf{\tilde N_{2}}/M}(x)\Sigma_{\tilde{\mathbf{M}}_{\al}/M}(x)\label{free convolution}\!\Longleftrightarrow\!\\
\left(\!-\frac{x\!+\!1}{x}\!\right)\!\eta^{-1}_{\mathbf{K}_{\al}/M}(x\!+\!1)&\!=\!\Sigma_{\mathbf{\tilde N_{2}}/M}(x)\!\left(\!-\frac{x\!+\!1}{x}\!\right)\!\eta^{-1}_{\tilde{\mathbf{M}}_{\al}/M}(x\!+\!1),\nonumber
\end{align}
where we have taken into advantage the asymptotic freeness between the deterministic matrix with bounded eigenvalues  $\mathbf{\tilde N_{2}}/M$ and the unitarily invariant matrix $\tilde{\bM}_{\al}/M $. Note that in~\eqref{free convolution}, we have applied Definition \ref{def: Sigma transform}. Appropriate change of variables, i.e., $y=x+1$ provides eq. \eqref{eq: inverse eta transform K}.
\end{proof}
\end{proposition}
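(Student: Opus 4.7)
The plan is to realize $\mathbf{K}_{\al}/M$ as a product of two asymptotically free factors and then exploit the multiplicativity of the $\mathrm{S}$-transform under free multiplicative convolution. Since $\mathbf{K}_{\al}=\tilde{\bN}_{2}\tilde{\bM}_{\al}$ by~\eqref{ka}, the matrix $\mathbf{K}_{\al}/M$ has the same spectrum as the product $(\tilde{\bN}_{2}/M)\tilde{\bM}_{\al}$, and it is natural to split the normalization across the two factors $\tilde{\bN}_{2}/M$ and $\tilde{\bM}_{\al}/M$, keeping track of any leftover scaling in the final step.

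Next, I would verify the asymptotic freeness hypothesis. The factor $\tilde{\bM}_{\al}/M$ depends only on $\bH_{1}$, while $\tilde{\bN}_{2}/M=\bH_{2}^{\H}\bH_{2}/M$ depends only on the independent Gaussian matrix $\bH_{2}$. Conditioning on $\bH_{1}$, the factor $\tilde{\bM}_{\al}/M$ is deterministic with almost surely bounded spectrum (its limiting support is compact by the Mar\v{c}enko--Pastur law in Definition~\ref{df: MP law}), while $\tilde{\bN}_{2}/M$ remains bi-unitarily invariant. Standard asymptotic freeness between a unitarily invariant ensemble and an independent operator with bounded spectrum, as summarized in~\cite{Tulino04}, then yields
\begin{equation*}
\Sigma_{\mathbf{K}_{\al}/M}(x)=\Sigma_{\tilde{\bN}_{2}/M}(x)\,\Sigma_{\tilde{\bM}_{\al}/M}(x).
\end{equation*}

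With multiplicativity of the $\mathrm{S}$-transform in hand, the remaining derivation is purely algebraic. I would substitute Definition~\ref{def: Sigma transform}, namely $\Sigma_{\mathbf{X}}(x)=-\tfrac{x+1}{x}\eta^{-1}_{\mathbf{X}}(x+1)$, into both sides, cancel the common prefactor $-\tfrac{x+1}{x}$ that appears on the left and on the $\tilde{\bM}_{\al}/M$ term on the right, and then change variables $y=x+1$ to arrive at
\begin{equation*}
\eta^{-1}_{\mathbf{K}_{\al}/M}(y)=\Sigma_{\tilde{\bN}_{2}/M}(y-1)\,\eta^{-1}_{\tilde{\bM}_{\al}/M}(y),
\end{equation*}
which is precisely the claimed identity~\eqref{eq: inverse eta transform K}.

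The hard part will be the freeness justification rather than the algebra. One must be careful that $\tilde{\bM}_{\al}/M$ is not literally deterministic, so the standard theorem is applied either after conditioning on $\bH_{1}$ or by invoking asymptotic freeness of two independent unitarily invariant Gaussian ensembles directly. A secondary subtlety is that $\mathbf{K}_{\al}=\tilde{\bN}_{2}\tilde{\bM}_{\al}$ is not Hermitian, but since $\tilde{\bM}_{\al}$ is positive definite it shares its nonzero spectrum with the Hermitian product $\tilde{\bM}_{\al}^{1/2}\tilde{\bN}_{2}\tilde{\bM}_{\al}^{1/2}$, whose limiting distribution is the free multiplicative convolution of the two marginals; hence all the transforms appearing in the argument are well defined and the manipulation is rigorous.
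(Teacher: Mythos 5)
Your proposal follows essentially the same route as the paper: decompose $\mathbf{K}_{\al}/M$ as the product of $\tilde{\bN}_{2}/M$ and $\tilde{\bM}_{\al}/M$, invoke asymptotic freeness to get multiplicativity of the $\mathrm{S}$-transform, substitute Definition~\ref{def: Sigma transform}, cancel the common prefactor, and change variables $y=x+1$. Your added care about which factor plays the deterministic role (conditioning on $\bH_{1}$) and about the non-Hermitian product sharing its spectrum with $\tilde{\bM}_{\al}^{1/2}\tilde{\bN}_{2}\tilde{\bM}_{\al}^{1/2}$ tightens details the paper leaves implicit, but the argument is the same.
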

\begin{longequation*}[tp]
\begin{align}        f^\infty_{\tilde{\bM}_{\al}/M}(x,\beta,\bar \al){\rightarrow}  {\frac {\sqrt { \left( x-1-\bar \al+2\bar  \al\sqrt {\beta}-\bar \al \beta \right) 
 \left( \bar \al+2\bar  \al\sqrt {\beta}+\bar \al \beta-x+1 \right) }}{2\bar \al\pi   \left( x-1
 \right) }}.
 \label{eq: I plus MP aepdf}
\end{align}
\hrule
\end{longequation*}
In order to obtain  $\eta^{-1}_{\tilde{\bM}_{\al}/M}\left(x  \right)$, we first need its a.e.p.d.f., given by the next proposition.
\begin{proposition}
 \label{lem: M aepdf}
The a.e.p.d.f. of $\tilde{\bM}_{\al}/M$ converges almost surely to~\eqref{eq: I plus MP aepdf} with  $\bar \al=M \al$.
\end{proposition}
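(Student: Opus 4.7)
\medskip
\noindent\textbf{Proof proposal.} The plan is to view $\tilde{\bM}_\al$ as an affine deformation of $\bH_1\bH_1^H/M$, invoke the Marchen\-ko--Pastur law of Definition~\ref{df: MP law} for the spectrum of that Wishart-type matrix, and push the limiting density through the affine map. The rescaling $\bar\al=M\al$ is exactly what keeps the spectrum of $\tilde{\bM}_\al$ of order one as $M,K\to\infty$ with $\beta=K/M$ held fixed, so that the limiting density is nontrivial.

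Concretely, I would first rewrite $\tilde{\bM}_\al=\Id_M+\bar\al\,(\bH_1\bH_1^H/M)$. Applying Definition~\ref{df: MP law} to the transpose of $\bH_1$---a $K\times M$ i.i.d.\ complex Gaussian matrix with aspect ratio $\beta=K/M$---shows that the a.e.p.d.f.\ of $\bH_1\bH_1^H/M$ converges almost surely to the Marchen\-ko--Pastur density
\begin{align}
g(\lambda)=(1-\beta)^{+}\delta(\lambda)+\frac{\sqrt{(\lambda-a)^{+}(b-\lambda)^{+}}}{2\pi\lambda},
\end{align}
with $a=(1-\sqrt\beta)^2$ and $b=(1+\sqrt\beta)^2$.

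Since the affine map $\lambda\mapsto x=1+\bar\al\lambda$ carries each eigenvalue of $\bH_1\bH_1^H/M$ to the corresponding eigenvalue of $\tilde{\bM}_\al$, the standard change-of-variables rule delivers the limiting density as $f^{\infty}(x)=(1/\bar\al)\,g((x-1)/\bar\al)$. The isolated mass of size $(1-\beta)^{+}$ is transported to $x=1$, while the bulk component, after pulling $1/\bar\al$ out of the square root and cancelling $(x-1)/\bar\al$ against the $1/\lambda$ appearing in $g$, simplifies to
\begin{align}
\frac{\sqrt{(x-1-\bar\al a)^{+}(\bar\al b+1-x)^{+}}}{2\pi\bar\al(x-1)}.
\end{align}
Expanding $\bar\al a=\bar\al-2\bar\al\sqrt\beta+\bar\al\beta$ and $\bar\al b=\bar\al+2\bar\al\sqrt\beta+\bar\al\beta$ then rearranges the two factors under the square root exactly into the form of~\eqref{eq: I plus MP aepdf}.

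\medskip
\noindent\textbf{Anticipated obstacle.} There is no genuinely hard step: the argument is a direct consequence of Marchen\-ko--Pastur together with an affine change of variables. The only care required is bookkeeping---matching the MP normalization in Definition~\ref{df: MP law} (which is stated for a $K\times M$ matrix) to the $M\times K$ channel $\bH_1$ by transposition, and confirming that $\bar\al=M\al$ is precisely the scaling that prevents the spectrum of $\Id_M+\al\bH_1\bH_1^H$ from either collapsing to a point mass at $1$ or diverging as $M\to\infty$.
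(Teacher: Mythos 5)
Your proposal is correct and follows essentially the same route as the paper: the paper's proof likewise writes the eigenvalues of $\tilde{\bM}_{\al}$ as $z=1+M\al x$ with $x$ an eigenvalue of $\tfrac{1}{M}\bH_1\bH_1^{\H}$, and applies the change-of-variables rule to the Mar\v cenko--Pastur density, yielding $\frac{1}{\bar\al}f^\infty_{\frac{1}{M}\tilde{\bN}_1}\bigl(\frac{z-1}{\bar\al}\bigr)$. Your additional remarks on the point mass at $x=1$ and on matching the $K\times M$ versus $M\times K$ orientation are correct bookkeeping that the paper leaves implicit.
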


\begin{proof}By denoting $z$ and $x$  the eigenvalues of $  \tilde{\bM}_{\al}/M$ and $\frac{1}{M}\tilde{\bN}_{1}$, respectively, the a.e.p.d.f. of $  \tilde{\bM}_{\al}/M$ can be obtained after making the transformation  $z(x)=(1+M \al x)$ as
\begin{align}
f^\infty_{\tilde{\bM}_{\al}/M}(z)&=\left| \frac{1}{z'(z^{-1}(x))} \right|\! \cdot \! f^\infty_{\frac{1}{M}\tilde{\bN}_{1}}\left( z^{-1}(x) \right)\nn\\
&=\frac{1}{\bar \al}f^\infty_{\frac{1}{M}\tilde{\bN}_{1}}\left(\frac{z-1}{\bar \al}\right).
\label{eq: M aepdf}
\end{align}
\end{proof}
\begin{longequation*}[tp]
\begin{align}
\!\!\!\eta_\mathbf{\tilde{M}_{\al}/M}(\psi)&\!=\!\frac{\bar\al}{4i\pi}\!\oint_{\left|\zeta\right|=1}\!\frac{(\zeta^2-1)^2}{\zeta((1+\beta)\zeta\!+\!\sqrt{\beta}(\zeta^2+1))(\zeta(1+\psi(1+\bar\al+\bar\al\beta))\!+\!\sqrt{\beta}\psi\bar\al(\zeta^2+1))}d\zeta.\!
\label{eq: eta derivation}
\end{align}
\hrule
\end{longequation*}
Consequently, we are ready to obtain $\eta^{-1}_{\tilde{\mathbf{M}}_{\al}/M}(x)$ by using~\eqref{eq: eta def}.
\begin{proposition}
\label{Theorem: eta transform M}
The inverse $\eta$-transform of $\tilde{\mathbf{M}}_{\al}/M$ is given by \eqref{eq: inverse eta transform M}.
\end{proposition}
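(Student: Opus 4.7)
The plan is to start from Definition~\ref{def: n transform} applied to $\tilde{\bM}_{\al}/M$, insert the a.e.p.d.f. from Proposition~\ref{lem: M aepdf}, and then convert the resulting real integral to a contour integral on $|\zeta|=1$. Explicitly, writing the support of $\tilde{\bM}_{\al}/M$ as $[1+\bar\al(1-\sqrt{\beta})^2,\,1+\bar\al(1+\sqrt{\beta})^2]$, I would apply the Zhukowski-type substitution $x = 1+\bar\al\bigl(1+\beta+\sqrt{\beta}(\zeta+\zeta^{-1})\bigr)$ with $\zeta=e^{i\theta}$. This substitution straightens the semicircular factor $\sqrt{(x-a')(b'-x)}$ coming from the shifted Marchenko--Pastur density in \eqref{eq: I plus MP aepdf} into a rational function of $\zeta$, and the Jacobian $dx = i\bar\al\sqrt{\beta}(\zeta-\zeta^{-1})\,d\zeta$ absorbs the square root. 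After simplification this produces exactly the contour integral in \eqref{eq: eta derivation}.

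Next I would evaluate that contour integral by the residue theorem. The denominator has two quadratic factors in $\zeta$: the first, $(1+\beta)\zeta+\sqrt{\beta}(\zeta^2+1)$, whose roots $-\sqrt{\beta},\,-1/\sqrt{\beta}$ are independent of $\psi$ and have exactly one lying inside $|\zeta|=1$; and the second, $\zeta(1+\psi(1+\bar\al+\bar\al\beta))+\sqrt{\beta}\psi\bar\al(\zeta^2+1)$, whose two roots have product $1$ so again exactly one lies inside the unit disk. There is also the simple pole at $\zeta=0$ coming from the explicit $1/\zeta$ factor. Summing the three residues yields $\eta_{\tilde{\bM}_{\al}/M}(\psi)$ as a rational function of $\psi$ modulated by a single square root in $\psi$ (the discriminant of the second quadratic).

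To finish, I would set $y=\eta_{\tilde{\bM}_{\al}/M}(\psi)$ and invert for $\psi$. Because the closed form carries exactly one square root in $\psi$, isolating and squaring it produces a quadratic in $\psi$ with coefficients rational in $y$, $\beta$ and $\bar\al$, whose solution can be written in closed form. The correct branch is pinned down by the normalisation condition $\eta^{-1}(1)=0$ (equivalently, $\eta(0)=1$), which selects one of the two algebraic roots. The resulting expression is precisely the formula asserted in \eqref{eq: inverse eta transform M}.

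The main obstacle I expect is the residue bookkeeping together with the branch choice: the root of the second quadratic that lies in $|\zeta|<1$ switches with the sign of $\psi$ and with the regime of $\beta$, and spurious $\sqrt{\beta}$-factors from the first-quadratic residue must cancel against those from the $\zeta=0$ residue so that the final answer is analytic in $\beta$. Once the poles inside the unit disk are correctly identified and the residues are combined, and once the branch of the square root is fixed by $\eta(0)=1$, the inversion step reduces to solving a quadratic, which is entirely routine.
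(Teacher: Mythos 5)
Your proposal follows essentially the same route as the paper: your single Zhukowski-type substitution is exactly the paper's chain of changes of variable $x=w\bar\al+1$, $w=1+\beta+2\sqrt{\beta}\cos\omega$, $\zeta=e^{i\omega}$ leading to the contour integral \eqref{eq: eta derivation}, the residue evaluation picks up the same three poles inside the unit disk (at $\zeta=0$, at $-\sqrt{\beta}$, and at the interior root of the $\psi$-dependent quadratic, matching the paper's $\rho_0+\rho_2+\rho_4$), and the final step is the same algebraic inversion of the $\eta$-transform. The only cosmetic slip is writing the Jacobian with $d\zeta$ where $d\omega$ is meant; it does not affect the argument.
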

\begin{proof}
\begin{longequation*}[tp]
\begin{align}
\!\!\!\!\!\eta^{\!-1}_{\tilde{\mathbf{M}}_{\al}/M}(x)\!=\!{\frac {-x\bar \al\!-\!\beta\bar \al\!+\!\bar \al\!-\!1\!\!+\!\!\sqrt {\!{x}^{2}{\bar \al}^{2
}\!+\!2x\bar {\al}^{2}\beta\!-\!2x\bar {\al}^{2}\!-\!2x\bar \al\!+\!{\beta}^{2}{\bar \al
}^{2}\!-\!2\beta{\bar \al}^{2}\!+\!2\beta\bar \al\!+\!{\bar \al}^{2}\!+\!2\bar \al\!+\!1
}}{2x\bar \al}}.\!\!\!\!
\label{eq: inverse eta transform M}
\end{align}
\hrule
\end{longequation*}
Having obtained the a.e.p.d.f. of $\tilde{\bM}_{\al}$, use of Definition~\ref{def: n transform} allows to derive its $\eta$-transform  as
\begin{align}
\!\!\!\eta_\mathbf{\tilde{M}_{\al}/M}(\psi)&\!=\!\!\int_{0}^{{+\infty}}\frac{1}{1+\psi x}f^\infty_\mathbf{\tilde{M}_{\al}/M}(x)dx. \nonumber
\end{align}
If we make the necessary substitution, $\eta_\mathbf{\tilde{M}_{\al}/M}(\psi)$ is written as in~\eqref{eq: eta derivation}. Following a similar procedure as in~\cite{Bai2010}, we  perform certain  substitutions. Specifically, we set  $x=w\bar\al+1$, $dx=\bar\al dw$, followed by $w=1+\beta+2\sqrt{\beta}\cos \omega$, $dw=2\sqrt{\beta}(-\sin\omega)d\omega$, and finally $\zeta=e^{i\omega}$,  $d\zeta=i\zeta d\omega$. Hence, initially we  calculate the poles $\zeta_i$ and residues $\rho_i$ of Eq.~\eqref{eq: eta derivation}.
Then, we perform an appropriate Cauchy integration by including the residues  located within the unit disk. More concretely, we have
\begin{align}
\eta_\mathbf{\tilde{M}_{\al}/M}(\psi)=-\frac{\beta}{2}(\rho_0+\rho_2+\rho_4),\nonumber                                                                     \end{align}
which after inversion results to Eq.~\eqref{eq: inverse eta transform M}.
\end{proof}

As far as $\Sigma_{\mathbf{\tilde N_{2}}/M}(x)$ is concerned, it is given by~\eqref{eq: sigma transform N} as
\begin{align}
 \Sigma_{\mathbf{\tilde N_{2}}/M}(x)=\frac{1}{\gamma+ x}.\label{N2}
\end{align}

In the last step,  having calculated  $\eta^{-1}_{\mathbf{K}_{\al}}(x)$ from~\eqref{eq: inverse eta transform K} after substituting~\eqref{eq: inverse eta transform M} and~\eqref{N2}, we employ~\eqref{lemma2a}, and after tedious algebraic manipulations we obtain the following quartic polynomial
\begin{align}
\bar{\al}^{2}x^2 \mathcal{S}_{\bK_{\al}/M}^{4} \nn \\
+(2 \bar{\al}^2 (1-\gamma) x+\bar{\al}^2 x^2) {S}_{\bK_{\al}/M}^{3} \nn \\
+(\bar{\al}^2 (2-\beta-\gamma) x+\bar{\al}^2 (\gamma-1)^2-\bar{\al} x) {S}_{\bK_{\al}/M}^{2} \nn \\
+(\bar{\al}^2 (\beta (\gamma-1)-\gamma)+\bar{\al} (\gamma+\bar{\al}-x-1) {S}_{\bK_{\al}/M}-\bar{\al}.
\label{eq: polK}
\end{align}


\end{appendices}

\section*{Acknowledgement}
\vspace{-5 pt}
This work was supported by  a Marie Curie Intra European Fellowship within the 7th European Community Framework Programme for Research of the European Commission under grant agreements no. [330806], ``IAWICOM'', and partially by FNR, Luxembourg under the CORE projects ``SeMIGod'' and ``SATSENT''.

\bibliographystyle{IEEEtran}

\bibliography{mybib}
\end{document}